\def\BibTeX{{\rm B\kern-.05em{\sc i\kern-.025em b}\kern-.08em
    T\kern-.1667em\lower.7ex\hbox{E}\kern-.125emX}}
\pgfplotsset{compat=newest}
\acrodef{OLS}{ordinary least squares}
\acrodef{MLE}{maximum likelihood estimator}
\acrodef{SNR}{signal-to-noise ratio}
\acrodef{LTI}{linear time-invariant}
\acrodef{ARX}{autoregressive systems with exogenous inputs}
\newtheorem{theorem}{Theorem}
\newtheorem{remark}{Remark}
\newtheorem{assumption}{Assumption}
\DeclareMathOperator*{\argmax}{arg\,max}
\DeclareMathOperator*{\argmin}{arg\,min}
\DeclareMathOperator*{\Tr}{Tr}
\newcommand{\R}{\mathbb{R}}
\newcommand{\Prob}{\mathbb{P}}
\newcommand{\Expect}{\mathbb{E}}
\newcommand{\N}{\mathcal{N}}
\newcommand{\bigO}{\mathcal{O}}
\newcommand{\simiid}{\stackrel{\text{i.i.d.}}{\sim}}
\newcommand*{\her}{\mathsf{H}}
\newcommand{\Acal}{\mathcal{A}}
\newcommand{\Bcal}{\mathcal{B}}
\newcommand{\rank}[1]{\mathrm{rank({#1})}}
\begin{document}
\title{Hidden Convexity in Active Learning: \\ A Convexified Online Input Design for ARX Systems
\thanks{NC and AI acknowledge the support by Deutsche Forschungsgemeinschaft (DFG, German Research Foundation) under Germany's Excellence Strategy - EXC 2075 – 390740016 and by the Stuttgart Center for Simulation Science (SimTech).
BS acknowledges the support of the International Max Planck
Research School for Intelligent Systems (IMPRS-IS).}
\thanks{The authors are with the University of Stuttgart, Institute for Systems Theory and Automatic Control, 70550 Stuttgart,
Germany (e-mail: \{\mbox{nicolas.chatzikiriakos}, \mbox{bowen.song}, \mbox{andrea.iannelli}\}@ist.uni-stuttgart.de.)}
}

\author{Nicolas Chatzikiriakos, Bowen Song, Philipp Rank, and Andrea Iannelli}

\maketitle
    
\begin{abstract}
The goal of this work is to accelerate the identification of an unknown ARX system from trajectory data through online input design. 
Specifically, we present an active learning algorithm that sequentially selects the input to excite the system according to an experiment design criterion using the past measured data. The adopted criterion yields a non-convex optimization problem, but we provide an exact convex reformulation allowing to find the global optimizer in a computationally tractable way. 
Moreover, we give sample complexity bounds on the estimation error due to the stochastic noise. Numerical studies showcase the effectiveness of our algorithm and the benefits of the convex reformulation.
\end{abstract}

\begin{keywords}
Input Design, Convex Optimization, Finite Sample Identification
\end{keywords}

\section{Introduction}
Input design methods from system identification leverage the flexibility in choosing the excitation input to accelerate the learning process of identifying an unknown system from data. 
However, existing works suffer from relying on computationally expensive or intractable operations~\cite{wagenmaker2020active} or only provide asymptotic guarantees~\cite{bombois2011optimal,gerencser2009identification}.      
To remedy this, in this work, we present a sample efficient algorithm that solves a computationally tractable problem to sequentially perform optimal experiments for fast learning of unknown \ac{ARX}. 
Importantly, the algorithm relies on solving a convex optimization problem to design optimal excitations and is equipped with finite sample guarantees.   
Our work is closely connected to classical experiment design \cite{gerencser2009identification, manchester2010input} and recent works in finite sample system identification \cite{wagenmaker2020active, Ziemann2023}. Next, we review the relevant literature to paint a clearer picture of existing gaps.
\paragraph*{Classical input design}
The design of optimal experiments has a long history in the literature of system identification; see \cite{1977dynamic, bombois2011optimal} for an overview on the topic.
One key problem in experiment design is that the design criteria, which are often a function of the Fisher-information matrix, naturally depend on the unknown system. 
To circumvent this problem, some works focus on a robust approach, typically using minimax objectives (see \cite{rojas2007robust} and the references therein). 
Another research thread focused on adaptively designing inputs based on a running estimate \cite{gerencser2005adaptive}.
In particular, \cite{gerencser2009identification} considered the problem of adaptive input design for \ac{ARX} systems in an asymptotic setting, yielding parameter convergence and an asymptotically optimal covariance matrix. 
To tackle cases where the optimization criterion is non-convex, \cite{manchester2010input} leverages 
convex 
relaxations to convexify the design objective.
However, even though in parts computationally attractive, these works only analyze the asymptotic behavior of the identification algorithms as the number of samples goes to infinity, whereas in practice, only finite data is available.
\paragraph*{Finite sample identification and active learning}
In recent years, due to novel results in high-dimensional statistics \cite{wainwright2019high}, there has been an increased interest in the finite sample analysis of system identification; see \cite{Ziemann2023} for an overview of the topic.
Primarily, the literature focused on analyzing system identification under random excitations for different classes of fully observed dynamical systems such as linear \cite{simchowitz2018learning, tsiamis2021linear}, bilinear \cite{sattar2022finite, chatzikiriakos2024b} and certain classes of nonlinear systems \cite{foster20a, Sattar2022}. 
Further, there have been considerable efforts in the analysis of learning a system description from input-output data \cite{oymak2019non, Ziemann2023}. 
Based on these findings, some works departed from purely random excitation and tried to leverage the excitation input to obtain a more sample-efficient identification.
The first work to address this problem was \cite{wagenmaker2020active}, in which the authors provide a sequential 
algorithm resembling a finite-sample version of E-optimal experiment design.
In \cite{wagenmaker2021task}, the same authors take downstream tasks and objectives into account, enabling task-specific learning.  
Subsequent works \cite{mania2022active, lee2024active} analyze algorithms for input design for non-linear systems under a finite sample perspective. 
However, even for the simplest case of fully observed linear systems, the optimal excitation is determined by solving a non-convex optimization problem. 
Thus, the global optimizer cannot be found efficiently, yielding a computationally unattractive algorithm.
To remedy this, \cite{wagenmaker2021task} proposes a convex relaxation of the problem. 
While this relaxation makes the algorithm computationally tractable, it is not guaranteed to recover the optimal solution of the original, non-convex problem.  
Finally, input design algorithms with finite sample guarantees are not available for \ac{ARX} systems. 
\subsection{Contribution}
We provide a new input design algorithm that sequentially generates periodic inputs to enable fast identification of unknown \ac{ARX} systems affected by Gaussian process noise. 
Our analysis also gives an upper bound on the finite sample estimation error of the unknown parameters. 
To do so, we leverage that the covariates' evolution can be represented by a state-space system, enabling us to use existing tools for the finite sample analysis. 
This results in an upper bound on the identification error of the parameters of the unknown \ac{ARX} system. 
In addition, inspired by considerations in \cite{manchester2010input, wagenmaker2020active}, we propose a convex reformulation to leverage a hidden convexity in the input design objective. 
We do so by transforming the non-convex optimization problem into a convex one from which the \emph{global} optimizer of the original non-convex objective can be reconstructed. 
Thus, the proposed algorithm only requires solving a convex optimization problem online.
Importantly, the convex reformulation deployed is not restricted to \ac{ARX} systems, but can also be used for fully observed linear systems and hence is of independent interest.   
With this, we provide a computationally tractable input design scheme equipped with finite sample guarantees, and therefore bridge some of the gaps between the active learning and classical input design literature. 
In concluding simulations, we demonstrate the effectiveness of our algorithm by comparing its estimation error to random excitations and the related approach from \cite{wagenmaker2021task}. 
The experiments show that our algorithm achieves the lowest estimation error, thus enhancing learning, while being computationally attractive and giving a guaranteed estimation accuracy. 
\subsection{Notation}
Given a matrix $M$ we denote its spectral radius, smallest eigenvalue, and spectral norm by $\rho(M)$, $\lambda_\mathrm{min}(M)$, and $\Vert M\Vert$, respectively.
We denote universal constants by $C$.
The set of all non-negative integers (the set of all non-negative integers smaller than $k$) is indicated by $\mathbb{N}$ ($[k]$).
We use $\mathrm{S}^n_+$ to denote the cone of positive semi-definite matrices of dimension $n$.
We use $[a]_i$ ($[A]_{ij}$) when referring to the $i$-th ($ij$-th) element of a vector $a$ (matrix $A$). 
Given matrices $A_i$, $i\in[1, p]$ we use $A_{1:p} = \begin{bmatrix}
    A_1 &\dots & A_p
\end{bmatrix}$.
We write $A^\her$ to denote the conjugate transpose of a complex matrix $A$.
Given a periodic time-domain signal $\{x_t\}_{t=1}^k$ we express its Fourier coefficients with $\check{x}_t$ and the Fourier transform by $\mathcal{F}\{\cdot\}$.
\section{Preliminaries and problem statement}
We consider the problem of identifying an unknown \ac{ARX} system of known orders $p, q >0$ described by
\begin{equation}\label{eq:ARX_System}
    y_t =  \sum_{i=1}^{p} A^*_i y_{t-i} + \sum_{j=1}^{q} B^*_j u_{t-j} + \Sigma_w^{\frac12}w_t,
\end{equation}
where $y_t \in \R^{n_y}$ and $u_t\in \R^{n_u}$ are the output and input of the system at timestep $t$, and $w_t\simiid \N(0, I_{n_x})$ is normalized process noise. We assume that the noise covariance matrix $\Sigma_w$ is known.
As initial conditions we assume $y_{-i} = 0$, $ u_{-i} = 0$, for integers $i \in[ 1, q]$. 
The goal of this work is to identify the unknown system matrices $\{A^*_i\}_{i=1}^p$, $\{B^*_j\}_{j=1}^q$ from data $\{y_t\}_{t=0}^{T}$, $\{u_t\}_{t=0}^{T-1}$ collected from a single trajectory. 
To this end, set $n_x \coloneqq (pn_y + qn_u)$ and define $\theta^*\coloneqq \begin{bmatrix}
    A_{1:p}^* & B_{1:p}^*
\end{bmatrix} \in \R^{n_y \times n_x}$ and $x_t \coloneqq \begin{bmatrix}
    y_{t-1:t-p}^\top  & u_{t-1:t-q} ^\top  
\end{bmatrix}^\top \in \R^{n_x}$ to rewrite~\eqref{eq:ARX_System} as 
\begin{equation}
    y_t = \theta^* x_t + \Sigma_w^{\frac12} w_t.
\end{equation}
To identify the unknown matrix $\theta^*$ we use the \ac{OLS} estimator 
\vspace{-8pt}
\begin{equation}\label{eq:OLS}
    \hat \theta_T = \argmin_{\theta\in \R^{n_y \times n_x}} \frac{1}{T} \sum_{t=1}^T \Vert y_t - \theta x_t \Vert_2^2.
\end{equation}
Assuming that the inverse of the empirical covariance matrix 
\vspace{-8pt}

\begin{equation} \label{eq:empCov}
    \hat{\Sigma}_T \coloneqq\frac1T\sum_{t=1}^Tx_t x_t^\top
\end{equation} exists, the \ac{OLS}~\eqref{eq:OLS} admits a well-known, unique closed-form solution, and its estimation error is given by
\vspace{-8pt}
\begin{equation}\label{c}
    \hat \theta_T - \theta^* = \sum_{t=1}^T w_t x_t^\top \Big(\sum_{t=1}^Tx_t x_t^\top\Big)^{-\frac12} \Big(\sum_{t=1}^Tx_t x_t^\top\Big)^{-\frac12}.
\end{equation}
As derived in \cite{Ziemann2023}, the evolution of the covariates can be represented by the non-minimal process
\begin{equation}\label{eq:sysExtended}
    x_{t+1} = \mathcal{A}^*x_t + \mathcal{B}_u u_t + \mathcal{B}_w w_t,
\end{equation}
where     $\Acal^* \coloneqq\begin{bmatrix}
    \Acal^*_{11} & \Acal^*_{12} \\ 0 & \Acal_{22}
\end{bmatrix}$ and 
\begin{subequations}
\begin{align}
\mathcal{A}_{11}^* &\coloneqq \begin{bmatrix}
            A_1^* & A_2^* &\dots & A_p^* \\
            I & 0 & \dots &0 \\
            0 & I & \dots & 0 \\
            0 &\dots &I &0
        \end{bmatrix}, \\
\Acal_{12}^* &\coloneqq \begin{bmatrix}
    1 &0 &\dots &0
\end{bmatrix} ^\top  \otimes B_{1:q}^*, \\
\Acal_{22} &\coloneqq \begin{bmatrix}
    0 &\dots &\dots &0 \\
    I_{n_u} & 0 &\dots &0 \\
    0 & \ddots & 0 &0 \\
    0 & \dots & I_{n_u} &0 
\end{bmatrix}, \\
\Bcal_u &\coloneqq \begin{bmatrix}
    0_{n_u \times p n_y  }& I_{n_u} & 0_{n_u \times (q-1)n_u}
\end{bmatrix}^\top, \\ 
\Bcal_w &\coloneqq \begin{bmatrix}
    \Sigma_w^{\frac{1}{2}} &n_y \times 0_{((p-1)n_y + q n_u)}
\end{bmatrix}^\top.
\end{align}
\end{subequations}
Subsequently, when writing $\hat \Acal_T$ we refer to the matrix $\Acal$ where all unknown quantities have been replaced by the estimate $\hat \theta_T$. 
Note that the matrices $\Bcal_u$ and $\Bcal_w$ are knowns.
In line with the relevant literature \cite{gerencser2009identification, wagenmaker2020active} we assume that the \ac{ARX} system~\eqref{eq:ARX_System} is asymptotically stable.
\begin{assumption}\label{ass:stable}
    System \eqref{eq:ARX_System} is asymptotically stable, i.e., $\rho(\mathcal{A}_{11}) < 1$.
\end{assumption}
While Assumption~\ref{ass:stable} might seem restrictive, related works providing a finite sample analysis of system identification with random inputs are restricted to marginally stable systems \cite{simchowitz2018learning, Ziemann2023}, and learning unstable systems using the \ac{OLS} has been shown to be statistically inconsistent \cite{sarkar2019near}. 
The slightly stronger requirement of asymptotic stability is necessary for the proposed active learning algorithm (see \cite[Remark B.5]{wagenmaker2020active}) and cannot easily be circumvented. 

We now define the controllability Gramians with respect to the control input and the process noise, respectively
\begin{subequations}
    \begin{align}
    \Gamma_t(\Acal, \Bcal_u) \coloneqq \sum_{s=0}^{t-1} \Acal ^s \Bcal_u \Bcal_u^\top {\Acal^s}^\top, \\
    \Gamma_t(\Acal, \Bcal_w) \coloneqq \sum_{s=0}^{t-1} \Acal ^s \Bcal_w \Bcal_w^\top {\Acal^s}^\top.
    \end{align}	
\end{subequations}
Given a periodic input $u_t$ of period $k$ and bounded energy $\sum_{t=0}^{k-1} u_t^\top u_t\le \gamma^2 k$ we denote the normalized steady-state covariance when $w_t \equiv 0$, as
\begin{align}
     &\Gamma^{\check{u}}_k(\mathcal{A}, \Bcal_u) \coloneqq \lim_{T\to \infty} \frac{1}{T \gamma^2}\sum_{t=1}^T x_t x_t^\top \label{eq:Gramianu}\\ 
     &\quad= \frac{1}{\gamma^2 k^2} \sum_{\ell=0}^{k-1} (e^{j \frac{2\pi \ell}{k}} I - \Acal)^{-1} \Bcal_u \check{u}_\ell \check{u}_\ell^\her \Bcal_u^\her (e^{j \frac{2\pi \ell}{k}} I - \Acal)^{-\her}, \notag
\end{align}
where the equality holds due to Parseval's Theorem. 
Given matrices $\{\check{U}_\ell \}_{\ell\in [k]} \in \mathrm{S}_+^{n_u}$, with abuse of notation, we define 
\begin{align}
     &\Gamma^{\check{U}}_k(\mathcal{A}, \Bcal_u)  \label{eq:GramianU}\\
     &\quad  \coloneqq \frac{1}{\gamma^2 k^2} \sum_{\ell=0}^{k-1} (e^{j \frac{2\pi \ell}{k}} I - \Acal)^{-1} \Bcal_u \check{U}_\ell \Bcal_u^\her (e^{j \frac{2\pi \ell}{k}} I - \Acal)^{-\her} . \notag
\end{align}
It is important to observe at this point that \eqref{eq:Gramianu} can be equivalently represented by \eqref{eq:GramianU} by taking $\check{U}_\ell = \check{u}_\ell \check{u}_\ell^\her$, but the opposite is only true if $\rank{\check{U}_\ell} = 1$.
\section{Active learning for \ac{ARX} systems}
In \cite{Ziemann2023}, it was shown that isotropic Gaussian excitations are sufficient to identify the unknown \ac{ARX} system~\eqref{eq:ARX_System} from trajectory data.
However, in most cases isotropic Gaussian excitations are not optimal in terms of the resulting sample complexity. 
To improve the sample complexity of identifying~\eqref{eq:ARX_System} we propose to sequentially generate informative data using Algorithm~\ref{alg:InputDesignAlgo}, which is inspired by the algorithm proposed in \cite{wagenmaker2020active} for fully observed linear systems. 
\begin{algorithm}[H]
\caption{Active Learning Algorithm}
\label{alg:InputDesignAlgo}
\begin{algorithmic}[1]
    \Require $\gamma^2$, $T_0$, $k_0$
    \State Collect dataset $\mathcal{D}_0=\left\{\{y_t\}_{t=0}^{T_0}, \{u_t\}_{t=0}^{T_0}\right\}$ using $u_t \simiid \N(0, \tfrac{\gamma^2}{n_u}I_{n_u})$
    \State Estimate $\hat \theta_0$ from $\mathcal{D}_0$ by solving \eqref{eq:OLS} 
    \State Set $T = T_0$, $\mathcal{D} = \mathcal{D}_0$
    \For{i = 1,2,\dots}
        \State Set $T_{i}= 3T_{i-1}$, $k_i = 2 k_{i-1}$ 
        \State $u^* = \Call{OptInput}{T, T_i, \hat{\theta}_i, k_i, \tfrac{\gamma^2}{2}, \mathcal{D}}$
        \State Set $u^\eta_t = u_t^* + \eta_t$, where $\eta_t \simiid \N(0, \frac{\gamma^2}{2n_u}I_{n_u})$
        \State Collect $\mathcal{D}_{i}=\left\{\{y_t\}_{t=T}^{T+T_i}, \{u^\eta_t\}_{t=T}^{T+T_i}\right\}$ using $u^\eta_t$
        \State Estimate $\hat \theta_i$ from $\mathcal{D} = \bigcup_{j=0}^{i} \mathcal{D}_j$ by solving \eqref{eq:OLS} 
        \State Set $T = T+ T_i$
    \EndFor
\end{algorithmic}
\end{algorithm}
In essence, Algorithm~\ref{alg:InputDesignAlgo} sequentially generates periodic signals that excite the true system \eqref{eq:ARX_System} according to some optimality criterion.
Hereby, the length of each episode $T_i$ is increased exponentially to put more weight on inputs generated with more accurate estimates. Similar strategies have also been used successfully in other sequential learning problems, such as best-arm identification~\cite{karnin2013}.
In our work, we consider periodic input signals since, as shown in~\cite{wagenmaker2020active}, this class is general enough to achieve an optimal rate.
The number of excitation frequencies $k_i$ is increased sequentially, although at a slower rate than $T_i$. The obtained periodic signal is repeated until the end of the episode.
For technical reasons, the algorithm requires to check whether all frequencies are safe to plan with, which is why in the original formulation an uncertainty estimate $\epsilon_i$ is obtained in each round. Empirically, this step is not necessary (see \cite[Remark 2.5]{wagenmaker2020active} for a detailed discussion) and hence is omitted here for clarity of exposition. However, it is important to note that this step is necessary for the proof of Theorem~\ref{th:FiniteSampleActive}.
As information criterion driving input selection, we take the viewpoint of E-optimality \cite{rojas2007robust, bombois2011optimal}. That is, the input sequence is selected to maximize the smallest eigenvalue of the empirical covariance matrix~\eqref{eq:empCov}, while satisfying an energy constraint on the input.
As shown in \cite{wagenmaker2020active}, this translates to optimizing the Fourier coefficients of the optimal input sequence via the non-convex optimization problem
\vspace{-15pt}

\begin{subequations}
    \label{eq:optProbNonConvex}
    \begin{alignat}{2}
    \check{u}^* \in &\argmax_{\{\check{u}_\ell\}_{\ell\in [k_i]}}  \; &&\lambda_{\mathrm{min}} \Big( \frac{\gamma^2}{2} T_i \Gamma^{\check{u}}_{k_i}(\hat{\mathcal{A}}_i, \mathcal{B}_u) + \sum_{t=1}^T x_t x_t^\top \Big) \label{eq:optProbNonConvex1}\\
       & \quad \mathrm{s.\,t.}  && \sum_{\ell \in [k_i]} \Tr(\check{u}_\ell \check{u}_\ell^\her) \le \frac{k_i^2\gamma^2}{2}.\label{eq:optProbNonConvex2}
    \end{alignat}
\end{subequations}
The matrix $\Gamma_{k_i}^{\check{u}}(\hat \Acal_i, \Bcal_u)$ in the optimality criterion~\eqref{eq:optProbNonConvex1}  describes the predicted future effect of the designed excitation input on the covariance (\ref{eq:Gramianu}) and $\sum_{t=1}^T x_t x_t^\top$ is the empirical covariance based on the past measurements collected in $\mathcal{D}$. 
The constraint~\eqref{eq:optProbNonConvex2} is necessary to ensure the energy bound for the input. 
We observe that \eqref{eq:optProbNonConvex} is non-convex in the Fourier coefficients of the unknown input signal. After solving~\eqref{eq:optProbNonConvex} the designed input is computed using the inverse Fourier transform
\begin{equation}\label{eq:DefOPtInput}
        \textsc{OptInput}(T, T_i, \hat \theta_i, k_i, \tfrac{\gamma^2}{2}, \mathcal{D}) = \mathcal{F}^{-1}\left\{\check{u}^*\right\}.
\end{equation}
Since $\theta_*$ is unknown to the learning algorithm, the estimate $\hat \theta_i$ is used to predict the future excitation in the optimization. 
\par 
In the subsequent Section~\ref{sec:FiniteSample}, we show that Algorithm~\ref{alg:InputDesignAlgo} enjoys finite sample guarantees for the identification of \ac{ARX} systems when the input is chosen as the solution of \eqref{eq:optProbNonConvex}-\eqref{eq:DefOPtInput}, which as noted earlier is a non-convex optimization problem. 
This is followed by Section~\ref{sec:convexRelaxation}, where we provide an exact convex reformulation of~\eqref{eq:optProbNonConvex}, so that the optimal excitation can be determined efficiently and reliably.
%
%
\subsection{Finite sample identification of ARX Systems}\label{sec:FiniteSample}
To simplify our presentation, we make the following additional assumption.
\begin{assumption}\label{ass:hyperparameters}
    The hyperparameters $T_0$ and $k_0$ are chosen such that $T_0 \ge f_{\theta_*}(k_0)$ for some known function $f_{\theta_*}(k_0)$.
\end{assumption}
We do not provide the exact expression for $f_{\theta_*}$ here to streamline the discussion and refer to \cite[Appendix B]{wagenmaker2020active} for the definition. Note that, even though $f_{\theta_*}$ depends on $\theta_*$ Assumption~\ref{ass:hyperparameters} is not restrictive, since the sequence $\{T_i\}_{i\ge0}$ increases faster than $\{k_i\}_{i\ge0}$. 
Thus, even if Assumption~\ref{ass:hyperparameters} is not satisfied for $i=0$ it will eventually be satisfied; see \cite[Appendix B]{wagenmaker2020active} for a more detailed discussion.
The sample complexity analysis of Algorithm~\ref{alg:InputDesignAlgo} relies on the system to reach steady-state for the given input sequence. To characterize the transient behavior of process~\eqref{eq:sysExtended} consider 
\vspace{-3pt}
\begin{equation}
    \beta(\mathcal{A^*}) \coloneqq \sup_{k\in \mathbb{N}} \Vert {\mathcal{A}^*}^k\Vert \left(\tfrac{1}{2} + \tfrac{\rho(\mathcal{A}^*)}{2}\right)^{-k}.
\end{equation}
As discussed in \cite{wagenmaker2020active}, $\beta(\mathcal{A}^*)$ is finite when Assumption~\ref{ass:stable} holds.
We are now ready to provide our first main result.
\begin{theorem}\label{th:FiniteSampleActive}
    \begin{figure*}[btp]
    \small{
        \begin{equation}\label{eq:FiniteSampleBoundUpper}
            \Prob\left[\Vert \hat{\theta}_T - \theta_* \Vert \le C \frac{\sqrt{\log\frac{1}{\delta} + n_x + \log \det\left(\bar \Gamma_T \left(\Gamma_{k(T)}(\mathcal{A}^*,\mathcal{B}_w^*) + \frac{\gamma^2}{p} \Gamma_{k(T)}(\Acal^*, \Bcal_u^*)\right)^{-1} + I_{n_x} \right)}}{\sqrt{T\lambda_\mathrm{min}\left(\Gamma_{k(T)}(\mathcal{A}^*,\mathcal{B}_w^*) + \gamma^2 \Gamma_{k(T)}^{\check{u}^*}(\mathcal{A}^*, \mathcal{B}^*_u)\right)} }\right]\ge 1 -\delta
        \end{equation}}
        \medskip
        \vspace*{-0.6\baselineskip}
        \hrule
        \medskip
        \vspace*{-0.6\baselineskip}
    \end{figure*}
    Let Assumptions~\ref{ass:stable} and~\ref{ass:hyperparameters} hold and fix some $\delta\in(0,1)$. Then Algorithm~\ref{alg:InputDesignAlgo} will produce inputs that satisfy $\Expect\left[\sum_{t=1}^T(u^{\eta}_t)^\top u^{\eta}_t\right] \le \gamma^2 T$. If further $T \ge \bar{T}(\theta_*, \frac{1}{\delta}, k(T))$, where $\bar{T}(\theta_*, \tfrac1\delta, k(T))$ is defined in \cite[Appendix B]{wagenmaker2020active}, the estimate $\hat \theta_T$ produced by Algorithm~\ref{alg:InputDesignAlgo} will satisfy~\eqref{eq:FiniteSampleBoundUpper}, where
    \begin{equation}
        \Bar \Gamma_T = \bigO\left(\frac{\beta(\Acal^*)^2\gamma^2 T}{ (1-\rho(\Acal^*))^2}\right) I_{n_x} 
    \end{equation}
    and $\check{u}^* = \mathcal{F}\{\textsc{OptInput}(T, \theta^*, k(T), \gamma^2 ,\emptyset)\}$.
\end{theorem}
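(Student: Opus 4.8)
The plan is to exploit the linear state-space representation \eqref{eq:sysExtended} of the covariate process, so that the \ac{ARX} identification problem is reduced to an \ac{OLS} analysis of a fully observed \ac{LTI} system, after which the machinery developed in \cite{wagenmaker2020active, Ziemann2023} can be applied almost verbatim. I would split the statement into two essentially independent claims: (i) the produced inputs respect the expected energy budget, and (ii) the high-probability error bound \eqref{eq:FiniteSampleBoundUpper}.

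For claim (i), I would use that the designed component $u^*_t$ obeys the budget $\gamma^2/2$ enforced by constraint \eqref{eq:optProbNonConvex2}, so that by Parseval its average per-step energy is $\gamma^2/2$, while the injected exploration noise satisfies $\eta_t \simiid \N(0, \frac{\gamma^2}{2n_u}I_{n_u})$, whence $\Expect[\eta_t^\top \eta_t] = \gamma^2/2$. Since $u^*_t$ is predictable with respect to the data collected so far and $\eta_t$ is independent and zero-mean, the cross term $\Expect[(u^*_t)^\top \eta_t]$ vanishes, and summing $\Expect[(u^\eta_t)^\top u^\eta_t] = (u^*_t)^\top u^*_t + \Expect[\eta_t^\top \eta_t]$ over all episodes yields $\gamma^2 T$.

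For claim (ii), I would start from the \ac{OLS} error identity \eqref{c} and bound numerator and denominator separately. The numerator $\sum_t w_t x_t^\top$ is a martingale, since $x_t$ is $\mathcal{F}_{t-1}$-measurable while $w_t$ is fresh Gaussian noise; a self-normalized tail bound then controls $\Vert(\sum_t x_t x_t^\top)^{-1/2}\sum_t x_t w_t^\top\Vert$ by $\sqrt{n_x + \log\frac1\delta + \log\det(\cdots)}$, which is precisely the numerator of \eqref{eq:FiniteSampleBoundUpper}; the matrix $\bar\Gamma_T$ enters here as the deterministic upper bound $\bigO(\beta(\Acal^*)^2\gamma^2 T/(1-\rho(\Acal^*))^2)I_{n_x}$ on the accumulated covariance, while the inverse Gramian supplies the matching lower normalization inside the $\log\det$. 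For the denominator I would show that $\frac1T\sum_t x_t x_t^\top$ concentrates around the steady-state covariance $\Gamma_{k(T)}(\Acal^*,\Bcal_w^*) + \gamma^2\Gamma^{\check u^*}_{k(T)}(\Acal^*,\Bcal_u^*)$, i.e.\ the sum of the process-noise Gramian and the optimally-designed input Gramian; here the transient constant $\beta(\Acal^*)$ quantifies how fast \eqref{eq:sysExtended} approaches steady state, Assumption~\ref{ass:hyperparameters} together with the geometric schedule $T_i = 3T_{i-1}$, $k_i = 2k_{i-1}$ guarantees that the burn-in $\bar T(\theta^*,\frac1\delta,k(T))$ suffices, and the lower bound on $\lambda_\mathrm{min}$ of this covariance is exactly what appears in the denominator of \eqref{eq:FiniteSampleBoundUpper}.

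The main obstacle I anticipate is controlling the adaptivity introduced by planning with the running estimate $\hat\theta_i$ rather than $\theta^*$: the realized covariance is governed by the true dynamics $\Acal^*$, whereas the input was optimized against $\hat{\Acal}_i$, so one must argue that $\Gamma^{\check u}_{k_i}(\hat\Acal_i,\Bcal_u)$ stays uniformly close to $\Gamma^{\check u}_{k_i}(\Acal^*,\Bcal_u)$ over the active frequencies. This certainty-equivalence gap is precisely why the omitted safe-frequency check is needed for the proof, and why the episodes grow geometrically, ensuring each estimate is accurate enough before it drives the design. Combining this perturbation control with the transient and self-normalized terms, summing the per-episode contributions, and keeping the total failure probability at $\delta$ through a union bound over episodes would then complete the argument.
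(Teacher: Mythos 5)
Your overall route is the same as the paper's: reduce the ARX problem to the fully observed linear system \eqref{eq:sysExtended} satisfied by the covariates, start from the OLS error identity \eqref{c}, and then invoke the machinery of \cite{wagenmaker2020active} (energy accounting, self-normalized martingale bound for the numerator, steady-state concentration of the empirical covariance, certainty-equivalence control over episodes). In fact, you sketch the internals of that machinery in more detail than the paper does --- the paper treats all of it as a black-box citation.

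However, you skip the one step that constitutes the actual technical content of the paper's proof: verifying that the extended system matrix $\Acal^*$ of \eqref{eq:sysExtended} satisfies the stability precondition needed to apply the results of \cite{wagenmaker2020active} at all. Assumption~\ref{ass:stable} only asserts $\rho(\Acal^*_{11})<1$, i.e., stability of the ARX recursion; it says nothing directly about $\Acal^*$, which additionally contains the input shift-register block $\Acal_{22}$. Every quantity in your argument --- the finiteness of $\beta(\Acal^*)$, the transient bound $\bar\Gamma_T$, the convergence of $\frac1T\sum_{t} x_t x_t^\top$ to the steady-state Gramians, the burn-in time $\bar T$ --- presupposes $\rho(\Acal^*)<1$, and without it the whole reduction collapses. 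The paper closes this with a short structural argument: $\Acal^*$ is block triangular, so $\rho(\Acal^*)=\max\bigl(\rho(\Acal^*_{11}),\rho(\Acal_{22})\bigr)$, and $\Acal_{22}$ is a nilpotent shift matrix, so $\rho(\Acal_{22})=0$, whence $\rho(\Acal^*)=\rho(\Acal^*_{11})<1$ by Assumption~\ref{ass:stable}. Adding this verification would make your plan match the paper's proof (and go beyond it in spelling out the cited analysis).
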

\begin{proof}
    First, note that by \eqref{eq:OLS} the estimation error can be bounded as  
    \begin{equation}
    \begin{aligned}
        \Vert \hat \theta_T - \theta^*\Vert \le &\lambda_\mathrm{min}\Big(\sum_{t=1}^Tx_t x_t^\top\Big)^{-\frac12} \\  &\times\left\Vert\sum_{t=1}^T w_t x_t^\top \Big(\sum_{t=1}^Tx_t x_t^\top\Big)^{-\frac12} \right\Vert.
    \end{aligned}
    \end{equation}
    Recall that the evolution of the covariates follows \eqref{eq:sysExtended}. 
    Since $\Acal^*$ is block-diagonal we have 
    \begin{equation}
        \rho(\Acal^*) = \max\Big(\rho(\Acal_{11}^*), \rho(\Acal_{22})\Big) =  \rho(\Acal_{11}^*) < 1,
    \end{equation} where the second equality follows since $\rho(\Acal_{22}) = 0$ by the lower triangular structure and the inequality follows from Assumption~\ref{ass:stable}.
    With this, we can use the results from \cite{wagenmaker2020active} to obtain Theorem~\ref{th:FiniteSampleActive}.
    Specifically, the analysis and the active learning algorithm use \eqref{eq:sysExtended} to quantify the influence of the control input and process noise on the covariates. 
\end{proof}
\begin{remark}
     For clarity in Theorem~\ref{th:FiniteSampleActive} we only provide a finite sample error upper bound for Algorithm~\ref{alg:InputDesignAlgo}. However, the optimality analysis in \cite{wagenmaker2020active} can also be adapted for \ac{ARX} systems by using the same rationale as in the proof of Theorem~\ref{th:FiniteSampleActive}.
\end{remark}
%
%
\subsection{Convexified input design}\label{sec:convexRelaxation}
Up to this point, Algorithm~\ref{alg:InputDesignAlgo} relied on solving the non-convex optimization problem~\eqref{eq:optProbNonConvex}.
However, this is often intractable and might result only in \emph{local} optima. 
This is an issue because the theoretical guarantees in Theorem 1 only hold when the solution of \eqref{eq:optProbNonConvex}-\eqref{eq:DefOPtInput} is used, and the empirical performance of Algorithm 1 degrades when suboptimal solutions are used.
To circumvent this, we propose an exact convex reformation of~\eqref{eq:optProbNonConvex} and show that the optimal input can be obtained through a convex optimization problem, precisely, a Semidefinite Program.
\begin{theorem}\label{th:convexRelax}
    Let $\check{U}_\ell^*$, $\ell \in [k_i]$  be the solution of 
    \begin{subequations}
    \label{eq:optProbConvex}
    \begin{alignat}{2}
        &\max_{\{\check{U}_\ell\}_{\ell \in [k_i]} \in \mathrm{S}_+^{d_u}} &&\lambda_{\mathrm{min}} \left( \tfrac{\gamma^2}{2}  	T_i\Gamma^{\check{U}}_{k_i}(\hat{\mathcal{A}}_i,{\mathcal{B}}_u) + \sum_{t=1}^T x_t x_t^\top \right) \label{eq:optProbConvex1}\\
        &\qquad \mathrm{s.\,t.} \quad && \sum_{\ell \in [k_i]} \Tr(\check{U}_\ell) \le \frac{{k_i}^2 \gamma^2}{2} \label{eq:optProbConvex2}
    \end{alignat}
    \end{subequations}
    and let $\check{u}_\ell^*$ be the eigenvector corresponding to the largest eigenvalue of $\check{U}_\ell^*$ for each $\ell \in [k_i]$. 
    Then $\check{u}_\ell^*$, $\ell \in [k_i]$ is the solution of~\eqref{eq:optProbNonConvex}.
\end{theorem}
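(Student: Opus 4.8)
The plan is to show that the two problems share the same optimal solution set once the optimizer of \eqref{eq:optProbConvex} is rounded to rank one. Write $G(\{\check U_\ell\}) \coloneqq \tfrac{\gamma^2}{2} T_i\, \Gamma^{\check U}_{k_i}(\hat{\Acal}_i,\Bcal_u) + \sum_{t=1}^T x_t x_t^\top$ and $M_\ell \coloneqq (e^{j 2\pi \ell/k_i} I - \hat{\Acal}_i)^{-1}$, so that both \eqref{eq:optProbNonConvex} and \eqref{eq:optProbConvex} maximize $\lambda_{\mathrm{min}}(G(\cdot))$ under the same trace budget $\tfrac{k_i^2\gamma^2}{2}$, the sole difference being that \eqref{eq:optProbNonConvex} additionally enforces $\check U_\ell = \check u_\ell \check u_\ell^\her$, i.e. $\mathrm{rank}(\check U_\ell) = 1$.

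First I would establish that \eqref{eq:optProbConvex} is a convex relaxation of \eqref{eq:optProbNonConvex}. Since $G(\cdot)$ is affine in $\{\check U_\ell\}$ and $\lambda_{\mathrm{min}}(\cdot)$ is concave, \eqref{eq:optProbConvex} is a genuine semidefinite program. As already noted below \eqref{eq:GramianU}, the substitution $\check U_\ell = \check u_\ell \check u_\ell^\her$ turns $\Gamma^{\check U}_{k_i}$ into $\Gamma^{\check u}_{k_i}$ and, because $\Tr(\check u_\ell \check u_\ell^\her) = \check u_\ell^\her \check u_\ell$, the constraint \eqref{eq:optProbConvex2} reduces to \eqref{eq:optProbNonConvex2}. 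Hence every feasible point of \eqref{eq:optProbNonConvex} embeds as a rank-one feasible point of \eqref{eq:optProbConvex} with identical objective, the optimal value of \eqref{eq:optProbConvex} upper bounds that of \eqref{eq:optProbNonConvex}, and, conversely, any rank-one feasible point of \eqref{eq:optProbConvex} is feasible for \eqref{eq:optProbNonConvex} with the same objective. For scalar inputs ($n_u=1$) this already finishes the proof, since every $\check U_\ell \in \mathrm{S}_+^{1}$ is trivially rank one; the substance of the theorem is the vector-input case.

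Next I would pass from the optimizer $\{\check U_\ell^*\}$ of \eqref{eq:optProbConvex} back to a rank-one point by concentrating each $\check U_\ell^*$ on its leading eigenvector. With leading eigenpair $(\sigma_\ell,v_\ell)$, $\Vert v_\ell\Vert=1$, I set $\check u_\ell^* = \sqrt{\Tr(\check U_\ell^*)}\,v_\ell$; feasibility for \eqref{eq:optProbNonConvex} is immediate because $\Tr(\check u_\ell^* (\check u_\ell^*)^\her) = \Tr(\check U_\ell^*)$, so \eqref{eq:optProbNonConvex2} holds with exactly the budget used in \eqref{eq:optProbConvex2}. It then remains to certify that this rounded input is a \emph{global} maximizer of \eqref{eq:optProbNonConvex}, i.e. that $\lambda_{\mathrm{min}}\big(G(\{\check u_\ell^*(\check u_\ell^*)^\her\})\big)$ equals the optimal value of \eqref{eq:optProbNonConvex}.

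This last step is where I expect the main obstacle. The difficulty is that $\lambda_{\mathrm{min}}$ is not monotone under the replacement $\check U_\ell^* \mapsto \Tr(\check U_\ell^*)\,v_\ell v_\ell^\her$: the discarded spectral components of $\check U_\ell^*$ are not dominated in the positive semidefinite order by the leading one, so a direct L\"owner comparison is unavailable (indeed the rounded objective can fall strictly below the relaxed optimum). I would therefore work through the stationarity and complementary-slackness conditions of \eqref{eq:optProbConvex}. Writing its epigraph form $\max\{\lambda : G(\{\check U_\ell\}) \succeq \lambda I,\ \check U_\ell \succeq 0,\ \sum_\ell \Tr \check U_\ell \le \tfrac{k_i^2\gamma^2}{2}\}$ and introducing duals $P \succeq 0$ and $\mu \ge 0$, stationarity gives $\tfrac{T_i}{2 k_i^2}\Bcal_u^\her M_\ell^\her P M_\ell \Bcal_u \preceq \mu I$ at every frequency, while complementary slackness forces $\mathrm{range}(P) \subseteq$ the minimal-eigenspace of $G(\{\check U_\ell^*\})$ and $\mathrm{range}(\check U_\ell^*) \subseteq$ the top-eigenspace of $\Bcal_u^\her M_\ell^\her P M_\ell \Bcal_u$. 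From these I would deduce that each leading eigenvector $v_\ell$ lies in the minimal-eigenspace of the optimal Gram matrix and points along a direction into which the energy at frequency $\ell$ may be optimally concentrated, and then reuse $(P,\mu)$ as an optimality certificate for \eqref{eq:optProbNonConvex} evaluated at $\{\check u_\ell^*(\check u_\ell^*)^\her\}$. The hard part is stitching these per-frequency slackness relations into a single certificate for the non-convex problem, since the frequencies couple only through the shared objective and the shared energy budget, so the rank-one reduction must ultimately be justified jointly across all $\ell \in [k_i]$ rather than one frequency at a time.
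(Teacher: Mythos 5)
Your proposal correctly sets up the relaxation and the rounding, but it is not a proof: the entire mathematical content of Theorem~\ref{th:convexRelax} lies in the step you explicitly leave open. After Parts 1--2 of your argument you have established only (i) that the optimal value of \eqref{eq:optProbConvex} upper bounds that of \eqref{eq:optProbNonConvex}, and (ii) that $\check{u}_\ell^* = \sqrt{\Tr(\check{U}_\ell^*)}\,v_\ell$ is \emph{feasible} for \eqref{eq:optProbNonConvex}. You are right that $\lambda_{\mathrm{min}}$ is not monotone under the replacement $\check{U}_\ell^* \mapsto \Tr(\check{U}_\ell^*)v_\ell v_\ell^\her$, and right that what must be shown is that the rounded point attains the optimum of the non-convex problem (not of the relaxation, which can be strictly larger). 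But your plan for showing this --- dual variables $P\succeq 0$, $\mu\ge 0$ for the epigraph SDP, complementary slackness, and then reusing $(P,\mu)$ as a certificate for \eqref{eq:optProbNonConvex} --- is left as a sketch, and you yourself identify the unresolved difficulty: the per-frequency slackness relations must be combined into a single joint certificate across all $\ell\in[k_i]$. That stitching step is precisely where the theorem is proved or not proved; without it, nothing in your argument connects the value of the rounded point to the optimal value of \eqref{eq:optProbNonConvex}, so the conclusion does not follow.

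For comparison, the paper closes this gap without any SDP duality. It fixes the direction $v_*$ achieving the minimum eigenvalue at the optimum, which makes the objective \emph{linear} in the decision variables: for fixed $v_*$ the relevant part of the cost is $\sum_{\ell}\Tr(H_\ell \check{U}_\ell)$ with $H_\ell = F_\ell^\her v_* v_*^\her F_\ell$, where $F_\ell = (e^{j2\pi\ell/k_i}I-\hat{\Acal}_i)^{-1}\Bcal_u$. Each $H_\ell$ is rank one, and after the eigendecomposition $H_\ell = V_\ell \Lambda_\ell V_\ell^\her$ and the change of variables $Z_\ell = V_\ell^\her \check{U}_\ell V_\ell$ (respectively $z_\ell = V_\ell^\her u_\ell$ in the rank-one case), both the PSD-constrained and the rank-constrained problems collapse to the \emph{same} separable allocation problem: place the entire trace budget on the largest eigenvalue among the $\Lambda_\ell$. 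This shows simultaneously that the two problems have equal optimal value, that a rank-one optimizer of \eqref{eq:optProbConvex} always exists, and that scaling a nonzero eigenvector of any optimizer $\check{U}_\ell^*$ by $\sqrt{\Tr(\check{U}_\ell^*)}$ attains that common value --- exactly the certification your sketch is missing. If you wish to salvage your duality route, it can in principle be made to work, but you would need to carry out the joint complementary-slackness analysis you deferred; the paper's linearization at $v_*$ is the shortcut that avoids it. Note also that both routes must deal carefully with eigenvalue multiplicities (your own rounding can fail if an arbitrary leading eigenvector is chosen when the top eigenspace is degenerate), a point the paper addresses by allowing weight to be distributed among tied eigenvalues.
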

\begin{proof}
    In this proof, we first show that \eqref{eq:optProbNonConvex} can be equivalently formulated as \eqref{eq:optProbConvex} with an additional non-convex rank constraint. 
    We proceed by analyzing the objective arising in both optimization problems.
    Then, we analytically derive the optimizers of~\eqref{eq:optProbNonConvex} and~\eqref{eq:optProbConvex} and use them to show the result.
    \\
    \textit{Part 1: Equivalent reformulation of~\eqref{eq:optProbNonConvex}:}
    Given a vector $\check{u}_\ell$, we define $\check{U}_\ell \coloneqq \check{u}_\ell \check{u}_\ell^\her$ which is rank-$1$. This allows us to rewrite the optimization problem \eqref{eq:optProbNonConvex} as
    \vspace{-7pt}
    \begin{subequations}
    \label{eq:optProbNonConvexre}
    \begin{alignat}{2}
        &\max_{\{\check{U}_\ell\}_{\ell \in [k_i]} \in \mathrm{S}_+^{d_u}} &&\lambda_{\mathrm{min}} \Big(\frac{\gamma^2}{2} T_i \Gamma^{\check{U}}_{k_i}(\hat{\mathcal{A}}_i,{\mathcal{B}}_u) + \sum_{t=1}^T x_t x_t^\top \Big) \label{eq:optProbNonConvexre1}\\
        &\qquad \mathrm{s.\,t.} \quad && \sum_{\ell\in [k_i]} \Tr(\check{U}_\ell) \le \frac{{k_i}^2\gamma^2}{2}\label{eq:optProbNonConvexre2}\\
        &\qquad \quad &&\mathrm{rank}(\check{U}_\ell)=1,\quad \forall \ell\in [k_i].\label{eq:optProbNonConvexre3}
    \end{alignat}
    \end{subequations}
   Note that the only difference between \eqref{eq:optProbNonConvexre} and \eqref{eq:optProbConvex} is the non-convex rank constraint \eqref{eq:optProbNonConvexre3}.
   In particular, the costs \eqref{eq:optProbConvex1} and \eqref{eq:optProbNonConvexre1} are identical and can be analyzed jointly.
   \\
   \textit{Part 2: Analyzing the unconstrained problem:}
    We define
    \begin{equation}
        M(\check{U}):=\frac{\gamma^2}{2} T_i \Gamma^{\check{U}}_{k_i}(\hat{\mathcal{A}}_i, {\mathcal{B}}_u) + \sum_{t=1}^T x_t x_t^\top\succeq 0
    \end{equation}
    and recall that $ \lambda_\mathrm{min}(M(\check U))= \min_{\Vert v \Vert =1} v^\her M(\check U) v $.
    Defining $F_\ell \coloneqq (e^{j \frac{2\pi \ell}{k_i}} I - \hat{\Acal}_i)^{-1} {\Bcal}_u$ for $\ell \in[k_i]$, $c_i:=\frac{T_i}{2 k_i^2}$ and using elementary reformulations we obtain  
    \begin{equation}
    \begin{aligned}
         \lambda_\mathrm{min}&(M(\check U)) \\ &= \min_{\Vert v \Vert =1} c_i v^\her \Big(\sum_{\ell\in [k_i]}F_\ell \check U_\ell F_\ell^\her \Big)v + v^\her \sum_{t=1}^T x_t x_t^\top v.
    \end{aligned}
    \end{equation}
    With this we can rewrite~\eqref{eq:optProbNonConvexre1} as
    \begin{equation}\label{Convexproof2}
    \begin{aligned}
        \max_{\{\check{U}_\ell\}_{\ell \in [k_i]} \in \mathrm{S}_+^{d_u}} \min_{\Vert v \Vert =1}\Bigg[c v^\her \Big(\sum_{\ell\in [k_i]}&F_\ell \check U_\ell F_\ell^\her \Big)v \\ &+ v^\her \sum_{t=1}^T x_t x_t^\top v\Bigg].
    \end{aligned}
    \end{equation}
    Let $v_*$ be the vector that achieves the minimum in~\eqref{Convexproof2}. 
    Note that for a fixed $v_*$, the maximum in \eqref{Convexproof2} is independent of the second term of the sum in \eqref{Convexproof2}.
    Defining $H_\ell:=F_\ell^\her v_* v_*^\her F_\ell$,  the optimal solution to \eqref{Convexproof2} is identical to the solution to 
    \begin{equation}\label{eq:transform1}
    \begin{aligned}
       \max_{\{\check{U}_\ell\}_{\ell \in [k_i]} \in \mathrm{S}_+^{d_u}}  &\sum_{\ell \in [k_i]} v_*^\her F_\ell \check U_{\ell} F_\ell^\her v_* \\
       &=\max_{\{\check{U}_\ell^*\}_{\ell \in [k_i]} \in \mathrm{S}_+^{d_u}}  \sum_{\ell \in [k_i]} \Tr \left(H_\ell\check U_{\ell}\right),
    \end{aligned}
    \end{equation}
    where the equality follows from the cyclic property of the trace. 
    Note that each matrix $H_\ell\succeq 0$, $\ell\in[k_i]$, admits an eigendecomposition $H_\ell=V_{\ell}\Lambda_{\ell} V_{\ell}^\her$
    with $V_{\ell} V_{\ell}^\her=I$. With this, it holds that 
    \begin{equation}\label{eq:rewrittenCost}\begin{aligned}
         &\max_{\{\check{U}_\ell\}_{\ell \in [k_i]} \in \mathrm{S}_+^{d_u}}  \sum_{\ell \in [k_i]} \Tr \left(H_\ell\check U_{\ell}\right) \\ &\qquad \qquad = \max_{\{\check{U}_\ell\}_{\ell \in [k_i]} \in \mathrm{S}_+^{d_u}}  \sum_{\ell \in [k_i]} \Tr \left(V_{\ell}\Lambda_{\ell} V_{\ell}^\her \check{U}_\ell\right) .
    \end{aligned}
    \end{equation}

    \textit{Part 3: Analytical solution of \eqref{eq:optProbNonConvexre}:}
    By the constraint~\eqref{eq:optProbNonConvexre3} there exist vectors $u_\ell$ such that $\check{U}_\ell = u_\ell u_\ell^\her$ for all $\ell \in [k_i]$, so that we can rewrite~\eqref{eq:rewrittenCost} as
    \begin{equation}\label{eq:transform2}
    \begin{aligned}
    \max_{\{\check{U}_\ell\}_{\ell \in [k_i]} \in \mathrm{S}_+^{d_u}}  &\sum_{\ell \in [k_i]} \Tr \left(V_{\ell}\Lambda_{\ell} V_{\ell}^\her \check{U}_\ell\right) \\ &= \max_{\{u_\ell\}_{\ell \in [k_i]}} \sum_{\ell \in [k_i]} u_\ell^\her V_\ell \Lambda_\ell V_\ell^\her u_\ell.  
    \end{aligned}
    \end{equation}
    Applying the change of variables $z_\ell=V_{\ell}^\her u_\ell$ we equivalently rewrite \eqref{eq:transform2} as
    \begin{equation}\label{eq:transform4}
        \max_{\{z_\ell\}_{\ell \in [k_i]}}\sum_{\ell\in [k_i]}z_\ell^\her\Lambda_{\ell} z_\ell=\max_{\{z_\ell\}_{\ell \in [k_i]}} \sum_{\ell\in [k_i]}\sum_{j=1}^{n_u} [z_\ell]_{j}[z_\ell^\her]_{j}[\Lambda_{\ell}]_{jj}.
    \end{equation}
    Since the sum is separable, the optimal strategy under the energy constraint~\eqref{eq:optProbNonConvexre2} is to allocate all the weight to the largest eigenvalue across all $\Lambda_{\ell}$, $\ell \in[k_i]$. 
    Note that if the largest eigenvalue has multiplicity $m > 1$, there exist multiple optimal solutions that distribute the energy across the largest eigenvalues.
   \textit{Part 4: Analytical solution of \eqref{eq:optProbConvex}:} 
   Using the cyclic property of the trace and the change of variables $Z_\ell=V_{\ell}^\her \check{U}_\ell V_{\ell},\forall \ell\in[k_i]$, we can equivalently rewrite~\eqref{eq:rewrittenCost} as 
    \begin{equation}\label{eq:transform3}
        \begin{split}
            \max_{\{Z_\ell\}_{\ell \in [k_i]}\in \mathbb{S}^{n_u}_+} &\sum_{\ell\in [k_i]}\Tr(H_\ell V_{\ell} Z_{\ell} V_{\ell}^\her)\\
            =& \max_{\{Z_\ell\}_{\ell \in [k_i]}\in \mathbb{S}^{n_u}_+} \sum_{\ell\in [k_i]} \sum_{j=1}^{n_u} \left([\Lambda_{\ell}]_{jj} [Z_\ell]_{jj} \right),
        \end{split}
    \end{equation}
   where the second equality holds because $\Lambda_{\ell}$ is diagonal. Under the energy constraint \eqref{eq:optProbConvex2}, the optimal allocation strategy remains to allocate all weight to the largest eigenvalue across all $\Lambda_{\ell}$ with $\ell \in [k_i]$, or if multiple largest eigenvalues exist, to distribute the weight among them.If the latter is true, multiple optimal solutions exist. 
   Thus, the optimal solutions $Z^*_\ell$ for $\ell\in[k_i]$ are diagonal matrices with rank from $1$ to at most $m$, where $m$ is the number of the largest eigenvalues across all $\Lambda_\ell$ for $\ell \in[k_i]$. It is important to emphasize that given $v_*$ a rank-$1$ solution always exists.  \\
   \textit{Part 5: Conclusion:}
   Let $z^*_\ell$ and $Z^*_\ell$ be the optimal solutions to \eqref{eq:transform4} and \eqref{eq:transform3}, respectively. Specially, let $Z^1_\ell$ denote the optimal rank-$1$ solution to \eqref{eq:transform3}. Then, we have
   \begin{equation*}
\sum_{\ell\in [k_i]}{z^*_\ell}^\her\Lambda_{\ell} z^*_\ell=\sum_{\ell\in [k_i]}\Tr\left(\Lambda_{\ell} Z^1_\ell \right)= \sum_{\ell\in [k_i]}\Tr\left(\Lambda_{\ell} Z^*_\ell \right).
   \end{equation*}
   Applying the inverse change of variables $u_\ell^*=V_\ell z_\ell$, $\check{U}^*_\ell=V_\ell Z^*_\ell V_\ell^\her$ and $\check{U}^1_\ell=V_\ell Z^1_\ell V_\ell^\her$, we have
   \begin{equation*}
    \begin{split}
        \sum_{\ell\in [k_i]} {u^*_\ell}^\her H_\ell u^*_\ell=\sum_{\ell\in [k_i]} \Tr \left(H_\ell\check U^1_{\ell}\right)= \sum_{\ell\in [k_i]} \Tr \left(H_\ell\check U^*_{\ell}\right).
    \end{split} 
    \end{equation*}  
    As established in Part 4, any solution $\{\check{U}_\ell^*\}_{\ell \in [k_i]}$ of the convex problem \eqref{eq:optProbConvex} consists of matrices with rank at most $m$. Take an orthogonal eigendecomposition of $\{\check{U}_\ell^*\}_{\ell \in [k_i]}$ and denote any eigenvector corresponding to a nonzero eigenvalue by $\{\bar{u}_\ell^*\}_{\ell\in [k_i]}$. Then the optimal solution $\sqrt{\Tr({\check{U}^*_\ell})}\bar{u}_{\ell}^*$, $\forall \ell \in [k_i]$ achieves the same objective as in \eqref{eq:optProbNonConvexre}.
    \end{proof}
Theorem~\ref{th:convexRelax} shows that the non-convex optimization problem~\eqref{eq:optProbNonConvex}  used in line 6 of Algorithm~\ref{alg:InputDesignAlgo} has a hidden convexity property that we leverage to obtain an exact convex reformulation. 
It is worth observing that Theorem~\ref{th:convexRelax} also applies to fully observed state space systems, hence, the convex reformulation can also be used for the algorithm presented in~\cite{wagenmaker2020active}, yielding computationally attractive input design algorithms for both setups.
\section{Numerical Example}
To showcase the effectiveness of the proposed algorithm, we compare its estimation error with the error resulting from purely isotropic Gaussian excitations and the approach proposed in \cite{wagenmaker2021task}, which also relies on solving a convex problem, but
is not guaranteed to recover the solution of \eqref{eq:optProbNonConvex}.
The key difference of our approach compared to \cite[Algorithm 3]{wagenmaker2021task}, which also proposes a convex input design, is that the latter sequentially applies all eigenvectors of the solutions of the convex problem~\eqref{eq:optProbConvex} instead of just the largest eigenvector. 
As shown in \cite[Proposition 6.5]{wagenmaker2021task}, this reproduces any optimal solution to~\eqref{eq:optProbConvex} as $T\to \infty$, i.e., only as the number of samples goes to infinity. 
We consider the following \ac{ARX} system
\begin{align*}
    A_1^* &= \begin{bmatrix}
        0.7 & 0.1\\
        0 & 0.9  
    \end{bmatrix}, \,
    A_2^* =  \begin{bmatrix}
        -0.5 & 0 \\
        0.1 & -0.2 
    \end{bmatrix},
    \,
    B_1^* = \begin{bmatrix}
        0.1 & 0\\
        0 & 5 
    \end{bmatrix},
\end{align*}
which satisfies Assumption~\ref{ass:stable}. Further, we fix $\Sigma_w = I_{n_x}$ and $\gamma= 10$ and select the hyperparameters $T_0 = 200$ and $k_0 = 10$. 
We initially collect $50$ datapoints with random excitations and then compare the average estimation error over $200$ runs resulting from the following excitations:
\footnote{The Python code for the numerical example can be accessed at: \url{https://github.com/col-tasas/2025-hidden-convexity-active-learning}} 
\begin{enumerate}
    \item $u_t \simiid \mathcal{N}(0, \frac{\gamma^2}{n_u} I_{n_u})$.
     \item $u_t$ computed by Algorithm~\ref{alg:InputDesignAlgo} with~\cite[Algorithm 3]{wagenmaker2021task}
    \item $u_t$ computed by Algorithm~\ref{alg:InputDesignAlgo} with $\textsc{OptInput} = \mathcal{F}^{-1}\{\check{u}^*\}$ and $\check{u}^*$ as in Theorem~\ref{th:convexRelax}. 
\end{enumerate}
Approaches 2) and 3) require solving problem \eqref{eq:optProbConvex}, for which we use CVXPY~\cite{diamond2016cvxpy,agrawal2018rewriting} with the solver MOSEK~\cite{mosek}.
As shown in Figure~\ref{fig:errorComp}, our proposed input design algorithm outperforms both isotropic random excitations and the approach proposed in \cite{wagenmaker2021task}. 
Specifically, achieving the same accuracy with random excitations requires over 
50\% more samples than with our algorithm.
\begin{figure}[t]
    \centering
    \input{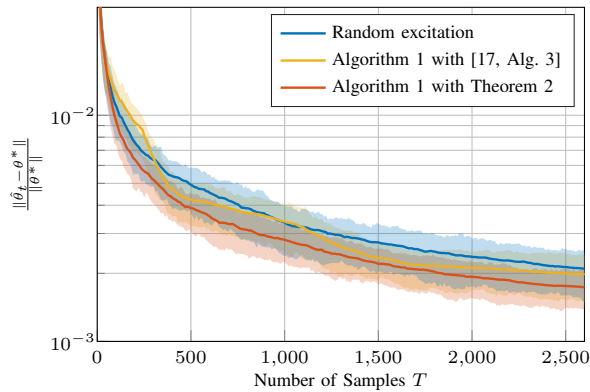}
    \caption{Comparison of the average estimation error over $200$ runs for different excitations. Shaded regions indicate the $25\%$ and $75\%$ percentiles.}
    \label{fig:errorComp}
\end{figure}
Note that in the approach presented in \cite{wagenmaker2021task}, the input sequence associated with each eigenvector is applied for $T_i$ time steps. This results in an episode length that scales with $n_u$. 
Crucially, our approach maintains the original feedback frequency, computing new optimal excitations every $T_i$ rather than every $n_uT_i$ samples. 
This is beneficial since the estimate $\hat \theta_i$ is used to derive the optimal excitation instead of the true system matrix $\theta^*$, and further, the true system is affected by noise. 
On the other hand, when the model used to predict the future evolution of the covariates is exact and there is no noise, our approach will result in the same input sequences as~\cite[Algorithm 3]{wagenmaker2021task}. 
We do not provide a comparison with an input obtained by solving with numerical methods~\eqref{eq:optProbNonConvex}. This is because solving the non-convex problem reliably required significantly more run-time and proved to be very sensitive to the provided initial guess, which cannot be easily obtained in practice.  

\section{Conclusion}
We present an input design algorithm that accelerates identification of ARX systems affected by Gaussian Process noise and provide a finite sample identification error bound.
Furthermore, we propose a convex reformulation of the originally non-convex input design problem, which enables finding the optimal excitation by solving a convex optimization problem. 
Notably, this convex reformulation can also be applied to fully observed state-space systems and hence is of independent interest. 
With this, we obtain a computationally attractive input design algorithm with finite sample guarantees. 
A concluding example showcases the effectiveness of our approach.
Interesting future research directions include the development of finite-sample versions of other classical optimality criteria to complement the existing literature. 

\bibliographystyle{IEEEtran}
\bibliography{references.bib}

\end{document}